\def\R{\mathbb{R}}
\def\eps{\epsilon}
\def\E{{\mathbb E}}
\def\H{{\mathcal H}}
\def\G{{\mathcal G}}
\def\X{{\mathcal X}}
\def\Y{{\mathcal Y}}
\def\sBer{{\mathsf{Bernoulli}}}
\newtheorem{definition}{Definition}
\newtheorem{theorem}{Theorem}
\newtheorem{corollary}{Corollary}
\newtheorem{lemma}{Lemma}
\theoremstyle{remark}
\newtheorem{remark}{Remark}
\tikzstyle{RectObject}=[rectangle,fill=white,draw,line width=0.2mm]
\tikzstyle{line}=[draw]
\tikzstyle{arrow}=[draw, -latex]
\begin{document}

\title{\vspace{5.5mm}On Maximal Correlation, Mutual Information\\ and Data Privacy}
\author{\IEEEauthorblockN{Shahab Asoodeh\thanks{This work was supported in part by NSERC of Canada.}, Fady Alajaji, and Tam\'{a}s Linder}\\
    \IEEEauthorblockA{Department of Mathematics and Statistics, Queen's University
    \\\{asoodehshahab, fady, linder\}@mast.queensu.ca}
}
\restoregeometry
\maketitle

\begin{abstract}
The rate-privacy function is defined in \cite{Asoodeh} as a tradeoff between privacy and utility in a distributed private data system in which both privacy and utility are measured using mutual information. Here, we use maximal correlation in lieu of mutual information in the privacy constraint. We first obtain some general properties and bounds for maximal correlation and then modify the rate-privacy function to account for the privacy-constrained estimation problem. We find a bound for the utility in this problem when the maximal correlation privacy is set to some threshold $\epsilon>0$ and construct an explicit privacy scheme which achieves this bound.


\end{abstract}
\section{Introduction}
 For a given pair of random variables $(X,Y)\in\X\times \Y$, the problem of privacy is, in general, to display a random variable, say $Z$, such that $Y$ and $Z$ are as much correlated as possible while $X$ and $Z$ are almost independent. To make this statement precise, we need to introduce two \emph{measures of dependence}, one for measuring the correlation between $Y$ and $Z$ and the other one between $X$ and $Z$. For two arbitrary alphabets $\mathcal{U}$ and $\mathcal{V}$ and random variables $U\in \mathcal{U}$ and $V\in\mathcal{V}$, a mapping $\delta:\mathcal{U}\times \mathcal{V}\to [0,1]$ is said to be a measure of dependence if $\delta(U;V)=0$ if and only if $U$ and $V$ are independent and $\delta(U;V)=1$ if there exists some deterministic functional relationship between $U$ and $V$, i.e., there exist functions $f$ and $g$ such that $X=f(Y)$ or $Y=g(X)$ with probability one. R\'{e}nyi \cite{Renyi-dependence-measure} postulated additional axioms for an appropriate measure of dependence. For example, the linear correlation coefficient, $|\rho(U;V)|$, is not a measure of dependence as it might become zero even if $U$ is perfectly determined by $V$.

R\'{e}nyi \cite{Renyi-dependence-measure} augmented the definition of the linear correlation coefficient by taking into account functions of $U$ and $V$ and then taking the supremum of $\rho(f(U);g(V))$ over all choices of appropriate functions $f$ and $g$, to obtain \emph{maximal correlation}. There are a few alternative characterizations of maximal correlation in the literature some of which are explained in the sequel. Due to its \emph{tensorization}\footnote{The measure of dependence $\delta(U;V)$ is said to have the tensorization property if for any $n$ i.i.d. copies $(U^n, V^n)$ of $(U,V)$, we have $\delta(U^n;V^n)=\delta(U;V)$. Note that mutual information violates this property as $I(U^n;V^n)=nI(U;V)$.} property, maximal correlation is shown to be very important in correlation distillation, e.g, \cite{Witsenhausen:dependent}, distributions simulation, e.g., \cite{Gohari-additivity}, and is also related to the hypercontractivity coefficient, e.g., \cite{anantharam} and \cite{Sudeep-Kamath-non-interactive}. Beigi and Gohari \cite{Gohari-nonlocal-Correlation} have recently proposed \emph{maximal correlation ribbon} as a generalization of maximal correlation.

Mutual information $I(U;V)$ can also be viewed as a measure which captures dependence between $U$ and $V$. Although, it is not a measure of dependence according to R\'{e}nyi's stipulations, it has some properties which make mutual information a good candidate in data privacy applications especially for measuring  utility.  Although both maximal correlation and mutual information have been used in numerous applications in information theory, the connection between them is still not fully explored in the literature.

The definition of maximal correlation together with some alternative characterizations are given in Section II. In Section III, we present some general results about maximal correlation and also some bounds in terms of mutual information. We then formulate a data privacy problem (privacy-constrained estimation) in terms of maximal correlation in Section IV and present some achievability results.

\section{Maximal Correlation: Definition and Characterization}
Suppose that $X$ is a random variable with distribution $P$, over alphabet $\X$ and $Y$ is another random variable which results from passing $X$ through channel $W$. Channel $W$ consists of a family of probability measures defined over alphabet $\Y$, i.e., $P_{Y|X}(\cdot|x)$ for $x\in\X$. We denote by $W\circ P$ the distribution on $\Y$ induced by the push-forward of the distribution $P$, which is the distribution of the output  $\Y$ when the input  $X$ is distributed according to $P$, and by $P\times W$ the joint distribution $P_{XY}$ if $P_X=P$.

Let $\G$ (resp. $\H$) be the set of all real-valued functions of $X$ (resp. $Y$) with zero mean and finite variances with respect to $P$ (resp.  $W\circ P$). The sets $\G$ and $\H$ are both separable Hilbert spaces with the covariance as the inner product.

For a fixed channel, $W$, the maximal correlation between $X$ and $Y$ is a functional of $P$ and $W$ defined as
\begin{eqnarray}\label{def_maximal}
    \rho_m(P;W)&:=& \sup_{g\in \G, f\in \H} \rho(g(X); f(Y))\\
    &=& \sup_{g\in \G, f\in \H, ||f||_2=||g||_2=1} \mathbb{E}[g(X)f(Y)]\nonumber,
\end{eqnarray}
where $\rho(\cdot;\cdot)$ is the linear correlation coefficient\footnote{I.e., $\rho(X;Y):=\frac{\text{Cov}(X;Y)}{\sigma_X\sigma_Y}$, where $\text{Cov}(X;Y),\sigma_X$ and $\sigma_Y$ are the covariance between $X$ and $Y$, the standard deviation of $X$ and the standard deviation of $Y$, respectively.} and for any random variable $U$, $||U||_2^2:=\mathbb{E}[U^2]$.
We use  interchangeably the notation $\rho_m(P; W)$ and $\rho_m(X;Y)$ where $X\sim P$ and $Y$ are respectively the input and output of channel $W$. Maximal correlation is a measure of dependence between random variables $X$ and $Y$, that is, $0\leq \rho_m(X;Y)\leq 1$ where $\rho_m(X;Y)=0$ if and only if $X$ and $Y$ are independent and $\rho_m(X;Y)=1$ if an only if there exists a pair of functions $g$ and $f$ such that $g(X)$ and $f(Y)$ are non-degenerate and $f(Y)=g(X)$ with probability one.
Maximal correlation is closely related to the conditional expectation operator, defined as follows.
\begin{definition}
For a given joint distribution $P_{XY}=P\times W$, the conditional expectation operator $T_X:\H\to \G$ is defined as
$$(T_Xf) (x):=\mathbb{E}[f(Y)|X=x].$$
\end{definition}
It is a well-known fact that the second largest singular value\footnote{For any arbitrary operator $T$ mapping (Banach) space $\X$ to itself, an eigenvalue of $T$ is defined as a number $\lambda$ such that $Tx=\lambda x$. The singular value of $T$ is then defined as the eigenvalue of $T^*T$ where $T^*$ is the adjoint of $T$. See \cite{invitation-abramovich2002} for more details.} of $T_X$ is precisely $\rho_m(P;W)$, see e.g.  \cite{Witsenhausen:dependent} and \cite{Renyi-dependence-measure}.

The definition of maximal correlation, given in \eqref{def_maximal}, has been simplified in the literature in general and also for some special cases. For example, by a simple application of the Cauchy-Schwarz inequality, R\'{e}nyi \cite{Renyi-dependence-measure} showed the following one-function characterization,
\begin{equation}\label{one-function-def}
    \rho^2_m(P;W)=\sup_{g\in\G, ||g||_2=1}\mathbb{E}[\mathbb{E}^2[g(X)|Y]].
\end{equation}
\begin{remark} If $\min\{|\X|, |\Y|\}=2$, then
\begin{equation}\label{binary-maximal}
    \rho_m^2(P;W)=\chi^2(P_{XY}||P_X\times P_Y),
\end{equation}
where the chi-squared divergence is defined as
\begin{equation}\label{cho-squared-def}
    \chi^2(P||Q):=\int \left(\frac{\text{d}P}{\text{d}Q}\right)^2\text{d}Q-1,
\end{equation}
where $\frac{\text{d}P}{\text{d}Q}$ is the Radon-Nikodym derivative of $P$ with respect to $Q$. Note that in the finite dimensional case, the singular values of operator $T_X$ are equal to the singular values of the matrix $Q=[\frac{P_{XY}(x,y)}{\sqrt{P_X(x)P_Y(y)}}]$, see \cite{linearinfo2}. The expression \eqref{binary-maximal} therefore follows from observing that $\rho_m^2(P;W)$ is the second largest eigenvalue of both $QQ^T$ and $Q^TQ$ either of which is a $2\times 2$ matrix which implies that $\rho_m^2(P;W)$ is equal to the trace of that matrix minus the largest eigenvalue, i.e., 1. It is important to mention here that $\chi^2(P_{XY}||P_XP_Y)$ is shown in \cite{Witsenhausen:dependent} to be equal to the sum of squares of the singular values of operator\footnote{In the finite dimensional case, the sum of the singular values of operator $T$ is equal to the Frobenius norm of $T$ which is defined as  $||T||_F=\text{Tr}(T^*T)$ where $\text{Tr}$ is the trace operator.} $T_X$ minus one (i.e., the largest one) while $\rho_m(X;Y)$ is the second largest one.
\end{remark}
Suppose $\tilde{W}$ is the backward channel corresponding to $W$, that is, if $W=P_{Y|X}$, then $\tilde{W}=P_{X|Y}$. Then the composition $\tilde{W}\circ W:\X\to \X$ defined by
$$\tilde{W}\circ W(x'|x)=\sum_{y\in\Y}W(y|x)\tilde{W}(x'|y),$$
is a channel for which $P$ is a stationary distribution and the associated conditional expectation operator $T_X$ is self-adjoint. It is easy to show that in this case
\begin{equation}\label{backward-channel}
    \rho_m^2(P;W)=\rho_m(P;\tilde{W}\circ W).
\end{equation}
To see this, note that it is show in \cite{anantharam} that
\begin{equation}\label{direct-inequality}
    \rho_m^2(P;W)=\sup_{g\in\G, ||g||_2=1}\mathbb{E}[g(X)g(X')],
\end{equation}
where $X'$ is the output of channel $\tilde{W}\circ W$ under input $X$. This clearly implies that $\rho_m^2(X;Y)\leq \rho_m(X;X')$. The following gives the reverse inequality. For arbitrary measurable functions $h, g:\X\to \R$, we have
\begin{eqnarray} \label{chain-of-inequa}
  \mathbb{E}[g(X)h(X')] &\stackrel{(a)}{=}& \mathbb{E}\Big[\mathbb{E}[g(X)|Y]\mathbb{E}[h(X')|Y]\Big] \nonumber\\
   &\stackrel{(b)}{\leq}& ||\mathbb{E}[g(X)|Y]||_2||\mathbb{E}[h(X')|Y]||_2\nonumber\\
   &\stackrel{(c)}{\leq }& \rho_m(X;Y)\rho_m(X'; Y)\nonumber\\
    &\stackrel{(d)}{=}& \rho_m^2(X;Y), \end{eqnarray}
where $(a)$ is due to the Markov condition $X\to Y\to X'$,  $(b)$ is a simple application of the Cauchy-Schwarz inequality, $(c)$ comes from \eqref{one-function-def}, and $(d)$ follows from the fact that $\rho_m(X'; Y)=\rho_m(X; Y)$. This chain of inequalities shows that $\rho_m(X;X')\leq \rho_m^2(X;Y)$ which, together with the earlier inequality, yields $\rho_m(X;X')=\rho_m^2(X;Y)$.
\section{Maximal Correlation and Mutual Information}
It is well-known that for Gaussian random variables $X$, $Y$ and $Z$ which satisfy the Markov condition  $X\to Y\to Z$, we have $\rho(X,Z)=\rho(Y, Z)\rho(X,Y)$. A similar relation for maximal correlation does not in general hold. However, the following theorem gives a similar result.
\begin{theorem}
For random variables $X$ and $Y$ with a joint distribution $P\times W$, we have
$$\sup_{\substack{X\to Y\to Z\\ \rho_m(Y;Z)\neq 0}}\frac{\rho_m(X;Z)}{\rho_m(Y;Z)}=\rho_m(X;Y).$$
\end{theorem}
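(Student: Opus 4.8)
The plan is to prove the two inequalities separately: the easy direction $\sup \ge \rho_m(X;Y)$ by exhibiting a good choice of $Z$, and the hard direction $\sup \le \rho_m(X;Y)$ by a data-processing type argument valid for every $Z$ with $X\to Y\to Z$.

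For the lower bound, I would take $Z=Y$ (assuming $\rho_m(Y;Y)=1\ne 0$, which holds whenever $Y$ is non-degenerate); then the ratio is exactly $\rho_m(X;Y)/1=\rho_m(X;Y)$, so the supremum is at least $\rho_m(X;Y)$. One should double-check the trivial case where $Y$ is a.s.\ constant, in which case $\rho_m(X;Y)=0$ and the constraint set is empty, so the statement holds vacuously (or one restricts attention to non-degenerate $Y$).

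For the upper bound, I would fix any $Z$ with $X\to Y\to Z$ and $\rho_m(Y;Z)\ne 0$ and show $\rho_m(X;Z)\le \rho_m(X;Y)\,\rho_m(Y;Z)$. The natural route is to start from the one-function characterization \eqref{one-function-def}, writing $\rho_m(X;Z)=\sup_{g\in\G,\|g\|_2=1}\|\E[g(X)\mid Z]\|_2$, and then to insert $Y$: by the tower property and the Markov chain, $\E[g(X)\mid Z]=\E\big[\E[g(X)\mid Y]\,\big|\,Z\big]$. Now $\E[g(X)\mid Y]$ is a function of $Y$; if $g$ has zero mean then so does this function, and its $L^2$ norm is at most $\rho_m(X;Y)$ by \eqref{one-function-def} (taking the sup over all unit-norm $g\in\G$). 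Writing $h(Y):=\E[g(X)\mid Y]$, one has $\|\E[h(Y)\mid Z]\|_2\le \rho_m(Y;Z)\,\|h\|_2$ — this is exactly the defining property of maximal correlation as the norm of the conditional-expectation operator restricted to zero-mean functions, i.e.\ the singular-value characterization mentioned after the Definition, together with \eqref{one-function-def} applied with $Y$ in the role of $X$. Chaining these gives $\|\E[g(X)\mid Z]\|_2=\|\E[h(Y)\mid Z]\|_2\le \rho_m(Y;Z)\,\|h\|_2\le \rho_m(Y;Z)\,\rho_m(X;Y)$, and taking the supremum over $g$ yields $\rho_m(X;Z)\le \rho_m(X;Y)\,\rho_m(Y;Z)$. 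Dividing by $\rho_m(Y;Z)>0$ and taking the supremum over admissible $Z$ completes the upper bound.

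The step I expect to be the main obstacle is the careful justification of $\|\E[h(Y)\mid Z]\|_2\le \rho_m(Y;Z)\,\|h\|_2$ for the specific function $h(Y)=\E[g(X)\mid Y]$: one must verify that $h$ lies in the right Hilbert space $\H$ (zero mean, finite variance) so that the operator-norm bound applies, and handle the normalization cleanly (the bound in \eqref{one-function-def} is stated for unit-norm functions, so one rescales $h$ by $\|h\|_2$, taking care of the degenerate case $h\equiv 0$, where the desired inequality is trivial). Everything else is bookkeeping with the tower property and the Markov condition $X\to Y\to Z$, exactly as in the chain of inequalities \eqref{chain-of-inequa} already used in the excerpt; indeed the argument here is a direct adaptation of that display with $X'$ replaced by a general downstream variable $Z$.
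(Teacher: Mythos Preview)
Your proposal is correct. The upper bound is essentially the paper's argument: the paper invokes the chain \eqref{chain-of-inequa} (Markov/tower property, Cauchy--Schwarz, then the one-function characterization \eqref{one-function-def}) to get the submultiplicativity $\rho_m(X;Z)\le\rho_m(X;Y)\,\rho_m(Y;Z)$, and your operator-norm phrasing via $h(Y)=\E[g(X)\mid Y]$ is the same computation read slightly differently.

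The achievability step is where you diverge. You take $Z=Y$, so that $\rho_m(Y;Z)=1$ and the ratio equals $\rho_m(X;Y)$ directly; this is clean and avoids any extra machinery. The paper instead uses the backward-channel construction from Section~II: it sets $Z=X'$, the output of $\tilde W\circ W$, and appeals to the identity $\rho_m(X;X')=\rho_m(X;Y)\,\rho_m(X';Y)$ established there. Your choice is more elementary and suffices for the theorem as stated; the paper's choice has the additional feature that it produces a $Z$ for which the submultiplicativity inequality $\rho_m(X;Z)\le\rho_m(X;Y)\,\rho_m(Y;Z)$ is itself tight (both sides equal $\rho_m^2(X;Y)$), not merely the ratio, which ties the result back to the earlier discussion of $\tilde W\circ W$.
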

\begin{proof}
First note that by data processing for maximal correlation the ratio on the left-hand side is always less than or equal to one. The inequality (c) in \eqref{chain-of-inequa} yields $\rho_m(X; Z)\leq \rho_m(X;Y)\rho_m(Y;Z)$ from which we can write
$$\frac{\rho_m(X; Z)}{\rho_m(Y;Z)}\leq \rho_m(X;Y).$$
The achievability result comes from the special case treated in Section II where $X\to Y\to X'$ and $P_{X'|Y}$ is the backward channel associated with $P_{Y|X}$. It was shown that $\rho_m(X;Y)\rho_m(X';Y)=\rho_m(X;X')$ which completes the proof.
\end{proof}
This theorem is similar to a recent result by Anantharam et al. \cite{anantharam} in which for a given $P_{XY}$ the ratio between $I(X;Z)$ and $I(Y;Z)$ is maximized over all channels $P_{Z|Y}$ such that the Markov condition $X\to Y\to Z$ is satisfied.

The following theorem connects the maximal correlation with mutual information when $X$ and channel $W$ are both assumed to be Gaussian.
\begin{theorem}\label{theorem-Gaussian}
Let $(X,Y)$ be jointly Gaussian random variables, then we have
$$\rho_m^2(X;Y)\leq 1-2^{-2I(X;Y)}\leq (2\ln 2)I(X; Y).$$
\end{theorem}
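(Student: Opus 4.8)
The plan is to reduce the statement to elementary facts about the canonical correlations of the Gaussian pair. The starting point is the classical observation (Lancaster) that for a jointly Gaussian pair the conditional expectation operator $T_X$ is diagonalized by the Hermite polynomials, with eigenvalues that are powers of the ordinary correlation coefficient; consequently, for \emph{scalar} jointly Gaussian $(X,Y)$ one has $\rho_m(X;Y)=|\rho(X;Y)|$. For the general (vector) Gaussian case I would first perform a canonical correlation analysis: there exist invertible linear maps $A,B$ such that $AX$ and $BY$ are standard Gaussian vectors whose coordinates pair up into mutually independent scalar Gaussian pairs with correlations $\rho_1\ge\rho_2\ge\cdots\ge 0$. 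Since maximal correlation is unchanged under invertible transformations of $X$ and of $Y$, and since for a product of independent pairs it equals the largest of the individual maximal correlations, this yields $\rho_m(X;Y)=\rho_1$.

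Next I would record the standard formula for the mutual information of a Gaussian pair in these decoupled coordinates,
\begin{equation*}
I(X;Y)=-\frac12\sum_i\log_2(1-\rho_i^2),\qquad\text{equivalently}\qquad 2^{-2I(X;Y)}=\prod_i(1-\rho_i^2).
\end{equation*}
The first inequality of the theorem then drops out: since every factor satisfies $1-\rho_i^2\le 1$,
\begin{equation*}
\rho_m^2(X;Y)=\rho_1^2=1-(1-\rho_1^2)\le 1-\prod_i(1-\rho_i^2)=1-2^{-2I(X;Y)},
\end{equation*}
with equality precisely when there is a single nonzero canonical correlation (in particular in the scalar case).

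For the second inequality, set $x:=2\ln 2\cdot I(X;Y)\ge 0$ and invoke the elementary bound $1-e^{-x}\le x$: this gives $1-2^{-2I(X;Y)}=1-e^{-x}\le x=(2\ln 2)I(X;Y)$, as claimed.

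The hard part is the first step, namely a clean justification that $\rho_m(X;Y)=\rho_1$ for vector Gaussians: one needs the scalar Lancaster identity together with the invariance of maximal correlation under invertible linear reparametrizations and its behaviour on independent products. Once that is in place the remaining steps are routine, and for scalar Gaussians the argument collapses to the single identity $\rho_m^2(X;Y)=\rho^2(X;Y)=1-2^{-2I(X;Y)}$ followed by the convexity estimate $1-e^{-x}\le x$.
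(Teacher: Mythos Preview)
Your argument is correct. For the scalar case---which is what the paper actually treats, since its proof opens with ``since $(X,Y)$ is bivariate Gaussian''---you and the paper both start from Lancaster's identity $\rho_m(X;Y)=|\rho(X;Y)|$ and both finish with the elementary bound $1-e^{-x}\le x$. The middle step differs: you invoke the closed-form Gaussian mutual information $I(X;Y)=-\tfrac12\log_2(1-\rho^2)$, which gives $\rho_m^2=1-2^{-2I(X;Y)}$ with \emph{equality} in one line. The paper instead proves the auxiliary identity $\sup_f\rho(X;f(Y))=\rho(X;\E[X|Y])$ via Cauchy--Schwarz and then appeals to the Shannon lower bound from rate-distortion, $I(X;\hat X)\ge\tfrac12\log\big(\mathsf{var}(X)/\E[(X-\hat X)^2]\big)$, with $\hat X=\E[X|Y]$. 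That route is more circuitous for this theorem (the bound is in fact tight here), but the intermediate identity \eqref{correlation-supremum} is reused later in the paper for the MMSE result, so it is not wasted effort. Your additional treatment of the vector case via canonical correlation analysis and Witsenhausen's tensorization lemma is a genuine extension the paper does not attempt; it is correct and yields a strict inequality whenever there is more than one nonzero canonical correlation, which nicely explains when the first inequality of the theorem is loose.
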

\begin{remark}
Linfoot \cite{Linfoot195785} introduced the \emph{informational} measure of correlation which is defined for two continuous random variables $X$ and $Y$ as $$r(X;Y):=\sqrt{1-2^{-2I(X;Y)}}.$$ Theorem~\ref{theorem-Gaussian} therefore implies that for jointly Gaussian random variables, $\rho_m(X;Y)\leq r(X;Y)$. The informational measure of correlation is generalized in \cite{Thesis-Shan-Lu} for general random variables.
\end{remark} \vspace{-0.2cm}
\begin{proof}
Since $(X,Y)$ is bivariate Gaussian, we know from \cite{lancaster} that  $\rho_m(X;Y)=|\rho(X;Y)|$. On the other hand, we can show that given a pair of random variables $X$ and $Y$, the conditional expectation of $X$ given $Y$ has the maximum linear correlation with $X$ among all functions $f\in\H$, i.e.
\begin{equation}\label{correlation-supremum}
    \sup_{f}\rho(X; f(Y))=\rho(X; \mathbb{E}[X|Y])=\frac{||\mathbb{E}[X]-\mathbb{E}[X|Y]||_2}{\sqrt{\mathsf{var}(X)}},
\end{equation}
where the supremum is taken over all measurable functions $f$ with finite variance (not necessarily with zero mean) and $\mathsf{var}(X)$ denotes the variance of $X$. To see this, without loss of generality, we can assume that $f\in \H$, i.e., $\mathbb{E}[f(Y)]=0$. Then we have
\begin{eqnarray*}
  \rho(X; f(Y)) &=& \frac{\mathbb{E}[X f(Y)]}{\sqrt{\mathsf{var}(X)}||f(Y)||_2} \\
   &=&  \frac{\mathbb{E}\big[f(Y)\mathbb{E}[X|Y]\big]}{\sqrt{\mathsf{var}(X)}||f(Y)||_2}\leq \frac{||\mathbb{E}[X|Y]||_2}{\sqrt{\mathsf{var}(X)}},
\end{eqnarray*}
where the inequality comes from the Cauchy-Schwarz inequality. Equality occurs if $f(Y)=\mathbb{E}[X|Y]$. It is a well-known fact from rate-distortion theory that  for Gaussian $X$ and its reconstruction $\hat{X}$
$$I(X; \hat{X})\geq \frac{1}{2}\log\frac{\mathsf{var}(X)}{\mathbb{E}[(X-\hat{X})^2]},$$ and hence by setting $\hat{X}=\mathbb{E}[X|Y]$, after some straightforward calculations we obtain \begin{equation}\label{mutualinfo}
    I(X; Y)\geq \frac{1}{2}\log\frac{1}{1-\rho^2(X; \mathbb{E}[X|Y])},
\end{equation}
 and hence,
 \begin{equation}\label{mutualinfo2}
    \rho^2(X; \mathbb{E}[X|Y])\leq 1-2^{-2I(X;Y)}.
\end{equation}
Combining \eqref{correlation-supremum} and \eqref{mutualinfo2}, we have
\begin{eqnarray*}
  \rho_m^2(X;Y) &\leq& \rho^2(X; \mathbb{E}[X|Y])\leq  1-2^{-2I(X;Y)} \\
   &= & 1-e^{-2\ln2I(X;Y)}\leq 2\ln2I(X;Y).
\end{eqnarray*} \vspace{-0.3cm} \end{proof}
Note that Theorem~\ref{theorem-Gaussian} is based on the fact that for jointly Gaussian random variables $X$ and $Y$, we have $\rho_m(X;Y)=|\rho(X; Y)|$. This is not, in general, true. For example consider a pair of zero-mean random variables $X=U_1V$ and $Y=U_2V$ where all $U_1$, $U_2$ and $V$ are independent and $\Pr(U_i=+1)=\Pr(U_i=-1)=1/2$ for $i=1,2$. We have $\mathbb{E}[X|Y]=\mathbb{E}[U_1V|U_2V]=0$ and similarly $\mathbb{E}[Y|X]=0$ both implying that $\rho(X;Y)=0$. Nevertheless, $\Pr(X^2=Y^2)=1$ implying that $\rho_m(X;Y)=1$.

The following theorem gives a lower bound for maximal correlation in terms of mutual information. We assume that the Radon-Nikodym derivative $P_{XY}$ with respect to $P_X\times P_Y$ exists which we denote it by $\imath$, i.e.,
\begin{equation}\label{information-density}
    \imath:=\frac{\text{d}P_{XY}}{\text{d}(P_X\times P_Y)}.
\end{equation}
The logarithm of this quantity is sometimes called the information density \cite[p. 248]{Gray:1990:EIT:90455}.
\begin{theorem}\label{theorem-min=2}
For a given $P_{XY}=P\times W$ with $\min\{|\X|, |\Y|\}=2$, we have
$$\rho_m^2(P; W)\geq 2^{I(P;W)}-1$$
\end{theorem}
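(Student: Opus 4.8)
The plan is to reduce the statement to a general inequality between $\chi^2$-divergence and relative entropy, and then dispatch it with Jensen's inequality. First I would invoke the Remark following \eqref{binary-maximal}: since $\min\{|\X|,|\Y|\}=2$, we have
\begin{equation*}
\rho_m^2(P;W)=\chi^2(P_{XY}||P_X\times P_Y),
\end{equation*}
so it is enough to prove $\chi^2(P_{XY}||P_X\times P_Y)\geq 2^{I(P;W)}-1$. This is the only place the cardinality hypothesis is used, and it is essential: in general one only has $\rho_m^2\leq\chi^2$ (the second-largest singular value versus the sum of squares of the non-principal ones), which would be useless for a lower bound.

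Next I would rewrite both sides in terms of the density $\imath$ of \eqref{information-density}. On one hand, $I(P;W)=D(P_{XY}||P_X\times P_Y)=\mathbb{E}[\log_2\imath]$, with the expectation taken under $P_{XY}$. On the other hand, the change-of-measure identity $\int\imath^2\,\text{d}(P_X\times P_Y)=\int\imath\,\text{d}P_{XY}$ gives
\begin{equation*}
1+\chi^2(P_{XY}||P_X\times P_Y)=\mathbb{E}[\imath],
\end{equation*}
again under $P_{XY}$. Then Jensen's inequality applied to the concave map $\log_2$ yields
\begin{equation*}
I(P;W)=\mathbb{E}[\log_2\imath]\;\leq\;\log_2\mathbb{E}[\imath]\;=\;\log_2\bigl(1+\chi^2(P_{XY}||P_X\times P_Y)\bigr)\;=\;\log_2\bigl(1+\rho_m^2(P;W)\bigr),
\end{equation*}
and exponentiating and rearranging gives exactly $\rho_m^2(P;W)\geq 2^{I(P;W)}-1$.

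I do not expect a serious obstacle here; the argument is short and the steps are standard. The points requiring a little care are: (i) correctly citing the Remark to equate $\rho_m^2$ with the $\chi^2$-divergence, which is where the hypothesis $\min\{|\X|,|\Y|\}=2$ is really needed; (ii) keeping the logarithm in base $2$ throughout so that $2^{I(P;W)}$ (rather than $e^{I(P;W)}$) appears; and (iii) justifying the change-of-measure identity $\int\imath^2\,\text{d}(P_X\times P_Y)=\int\imath\,\text{d}P_{XY}$, which is legitimate precisely because of the standing assumption that the Radon--Nikodym derivative in \eqref{information-density} exists. One could also remark that the intermediate bound $2^{I(P;W)}-1\leq\chi^2(P_{XY}||P_X\times P_Y)$ holds without any cardinality restriction, the restriction only serving to convert the right-hand side into $\rho_m^2(P;W)$.
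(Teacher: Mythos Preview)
Your proposal is correct and matches the paper's own proof: both use the identity $\rho_m^2=\chi^2(P_{XY}\|P_X\times P_Y)$ valid when $\min\{|\X|,|\Y|\}=2$, rewrite $1+\chi^2$ as $\mathbb{E}_{P_{XY}}[\imath]$, and then apply Jensen's inequality. The only cosmetic difference is that the paper phrases Jensen via the convex map $t\mapsto 2^t$ (writing $\mathbb{E}[2^{\log\imath}]\geq 2^{\mathbb{E}[\log\imath]}$) whereas you phrase it via the concave map $\log_2$, which is equivalent.
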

\begin{proof}
As mentioned earlier, when $\min\{|\X|, |\Y|\}=2$, then $\rho_m^2(X;Y)=\chi^2(P_{XY}||P_XP_Y)$ and hence
\begin{eqnarray}
  \rho_m^2(X;Y) &=& \int \text{d}P_{XY}\left(\frac{\text{d}P_{XY}}{\text{d}(P\times P_Y)}\right)-1 \nonumber\\
   &=& \mathbb{E}_{P_{XY}}\Big[2^{\log \imath(X,Y)}\Big]-1\nonumber\\
   &\geq & 2^{\mathbb{E}_{P_{XY}}[\log \imath(X,Y)]}-1,
\end{eqnarray}
where the inequality is due to Jensen's inequality.
\end{proof}
Theorem~\ref{theorem-min=2} hods only when either $|\X|=2$ or $|\Y|=2$. Suppose we have a binary-input AWGN channel modeled by $Y=X+N$, where $X\sim\sBer(p)$ and $N\sim\mathcal{N}(0, \sigma^2)$ are independent. Theorem~\ref{theorem-min=2} then implies that if $I(X;Y)\to 1$ (which occurs only when $\sigma^2\to 0$) then there exists a pair of functions $f\in\H$ and $g\in\G$ such that $f(Y)=g(X)$ with probability one. The following theorem gives an upper bound for maximal correlation when $|\X|<\infty$.
\begin{theorem}\label{theorem-Min-PX}
If $X$ is a discrete random variable with $|\X|<\infty$, then for a given joint distribution $P_{XY}=P\times W$, we have
$$P_{min}\rho_m^2(P;W)\leq \sqrt{(2\ln 2) I(P;W)},$$ where $P_{min}:=\min_{x\in\X} P(x)$.
\end{theorem}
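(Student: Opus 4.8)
The plan is to bound $\rho_m^2(P;W)$ from above directly through the one‑function characterization \eqref{one-function-def}, replacing the supremum over $g$ by a pointwise‑in‑$y$ Cauchy--Schwarz estimate, and then converting the resulting average chi‑squared quantity into mutual information by routing through total variation rather than comparing $\chi^2$ and $I$ directly.

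First I would fix an arbitrary $g\in\G$ with $\|g\|_2=1$, so that $\mathbb{E}_P[g(X)]=0$, and estimate $\mathbb{E}[g(X)\mid Y=y]$ for each $y$. Since $g$ has zero mean under $P$, I can write $\mathbb{E}[g(X)\mid Y=y]=\sum_x g(x)\big(P_{X|Y}(x|y)-P_X(x)\big)$ and apply Cauchy--Schwarz with respect to the measure $P_X$, which gives $\mathbb{E}^2[g(X)\mid Y=y]\le \chi^2(P_{X|Y=y}\|P_X)$. Crucially the right‑hand side does not depend on $g$, so taking expectation over $Y$ and then the supremum over $g$ in \eqref{one-function-def} yields $\rho_m^2(P;W)\le \mathbb{E}_Y\big[\chi^2(P_{X|Y}\|P_X)\big]$ with no attainment issues.

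Next is the step where $P_{min}$ enters. Using $\tfrac{1}{P_X(x)}\le\tfrac{1}{P_{min}}$ together with the elementary inequality $t^2\le t$ for $t=|P_{X|Y}(x|y)-P_X(x)|\in[0,1]$, each conditional chi‑squared is dominated by a total‑variation distance, $\chi^2(P_{X|Y=y}\|P_X)\le \tfrac{2}{P_{min}}\,\|P_{X|Y=y}-P_X\|_{\mathrm{TV}}$, whence $P_{min}\,\rho_m^2(P;W)\le 2\,\mathbb{E}_Y\big[\|P_{X|Y}-P_X\|_{\mathrm{TV}}\big]$. Finally I would convert the averaged total variation into mutual information: Pinsker's inequality (with base‑two logarithms) gives $\|P_{X|Y=y}-P_X\|_{\mathrm{TV}}\le\sqrt{\tfrac{\ln 2}{2}\,D(P_{X|Y=y}\|P_X)}$, and concavity of the square root (Jensen) together with $\mathbb{E}_Y\big[D(P_{X|Y}\|P_X)\big]=I(P;W)$ lets me pull the expectation inside to obtain $\mathbb{E}_Y[\|\cdot\|_{\mathrm{TV}}]\le\sqrt{\tfrac{\ln 2}{2}\,I(P;W)}$. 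Collecting the constants gives exactly $P_{min}\,\rho_m^2(P;W)\le\sqrt{(2\ln 2)\,I(P;W)}$.

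The main obstacle is conceptual rather than computational: most naive comparisons of $\rho_m^2$ (or $\chi^2$) with $I$ run the wrong way, since $I\le\log(1+\chi^2)$ and $I\le\chi^2$ both \emph{upper} bound $I$, so they cannot be turned into an upper bound on $\rho_m^2$. The key is to avoid comparing $\chi^2$ and $I$ altogether and to pass through total variation, where both Pinsker (bounding total variation by $\sqrt{I}$) and the crude but $P_{min}$‑sensitive bound $\chi^2\le 2\,\|\cdot\|_{\mathrm{TV}}/P_{min}$ point in the same useful direction; carefully tracking the factor $\tfrac12$ in Pinsker and the $\ln 2$ from the change of logarithm base is what produces the precise constant $2\ln 2$. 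Finiteness of $\X$ guarantees $P_{min}>0$ (the claim being trivial otherwise), but the argument needs no attainment of the supremum because the pointwise bound is uniform in $g$.
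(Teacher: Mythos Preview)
Your argument is correct and is essentially the paper's proof: both routes are $\rho_m^2\le \chi^2(P_{XY}\|P_XP_Y)\le \tfrac{1}{P_{\min}}\|P_{XY}-P_XP_Y\|_{TV}\le \tfrac{1}{P_{\min}}\sqrt{(2\ln 2)\,I(P;W)}$. The only cosmetic differences are that you rederive $\rho_m^2\le\chi^2$ via Cauchy--Schwarz on the one-function characterization (the paper simply cites the singular-value/Frobenius-norm interpretation) and you carry out the $\chi^2\to$TV and Pinsker steps conditionally on $Y$ plus a Jensen step, whereas the paper works directly with the joint; since $\mathbb{E}_Y\chi^2(P_{X|Y}\|P_X)=\chi^2(P_{XY}\|P_XP_Y)$ and $\mathbb{E}_Y\|P_{X|Y}-P_X\|_{TV}=\|P_{XY}-P_XP_Y\|_{TV}$, the two presentations coincide.
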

\begin{proof}
In the proof we assume that $Y$ has also a finite alphabet, however, the proof can be modified for general alphabet $\Y$. As mentioned earlier, for any pair of random variables $(X,Y)$, $\rho^2_m(X;Y)\leq \chi^2(P_{XY}||P\times P_Y)$ and hence
%
%
\begin{eqnarray*}
  \rho^2_m(X;Y) &\leq & \chi^2(P_{XY}||P\times P_Y) \\
   &=&  \sum_{x,y}\left(P_{XY}(x,y)-P(x) P_Y(y)\right)\frac{P_{XY}(x,y)}{P(x) P_{Y}(y)}\\
   &\leq & \max_{x,y}\frac{P_{XY}(x,y)}{P(x)\times P_{Y}(y)}||P_{XY}-P\times P_Y||_{TV}\\
   &\leq & \frac{1}{P_{min}}||P_{XY}-P\times P_Y||_{TV}\\
   &\leq & \frac{1}{P_{min}} \sqrt{(2\ln 2) I(P;W)},
\end{eqnarray*}
where $||Q-P||_{TV}:=\sum_{x}|Q(x)-P(x)|$ is the total variation distance for probability measures $Q$ and $P$ and the last inequality is due to Pinsker's inequality (see e.g., \cite[problem 3.18]{csiszarbook}).
\end{proof}
The value of the maximal correlation is often hard to calculate except for a few classes of joint distributions. For instance, as mentioned earlier, if $(X,Y)$ is jointly Gaussian then  the exact value of $\rho_m(X;Y)$ is known. Bryc et al. \cite{dembo1} showed that there exists another family of joint distributions for which the maximal correlation can be exactly computed. For this, we need the following definition.
\begin{definition}\hspace{-0.1cm}\cite{Durret'sBook}
A random variable $X$ is said to have an $\alpha$-stable distribution if the characteristic function of $X$ is of the form
\begin{eqnarray*}
  \varphi(t) &:=&\mathbb{E}[\exp(itX)]  \\
   &=& \exp\left(itc-b|t|^{\alpha}(1+i\kappa ~\text{sgn}(t)\omega_{\alpha}(t))\right),
\end{eqnarray*}
where $c$ is a constant, $\text{sgn}$ is the sign function, $-1\leq \kappa\leq +1$ and
\begin{equation*}
\omega_{\alpha}(t) = \begin{cases}
\tan (\frac{\pi\alpha}{2}) &\text{if  $\alpha\neq 1$}\\
\frac{2}{\pi}\log|t| &\text{if $\alpha=1$. }
\end{cases}
\end{equation*}
\end{definition}
Gaussian, Cauchy and L\'{e}vy distributions are examples of stable distributions.
\begin{theorem}\hspace{-0.3cm} \cite{dembo1}\label{stable-Theorem}
Let $(X, Y)$ be a given pair of random variables. \\
(I). If $N$ is a random variable with an $\alpha$-stable distribution and is independent of $(X,Y)$, then $\lambda\mapsto \rho_m(Y; X+\lambda N)$ is a non-increasing function for $\lambda\ge 0$.\\
(II). If $N$ and $X$ are independent and have the same $\alpha$-stable distribution for $0<\alpha\leq 2$, then for any $\lambda\geq 0$, $$\rho_m(X, X+\lambda N)=\frac{1}{\sqrt{1+\lambda^{\alpha}}}.$$
\end{theorem}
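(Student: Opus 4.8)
The plan is to prove the two parts by separate means: part (I) by a data-processing argument that exploits the infinite divisibility of stable laws, and part (II) by matching an explicit near-optimal test pair (lower bound) against a spectral estimate (upper bound). For part (I), fix $0\le\lambda_1\le\lambda_2$, put $\mu:=(\lambda_2^\alpha-\lambda_1^\alpha)^{1/\alpha}$, and introduce a fresh copy $N'$ of $N$ independent of $(X,Y,N)$. By the convolution (self-similarity) property of $\alpha$-stable laws, $\lambda_1N+\mu N'$ has the law of $\lambda_2N$ up to a deterministic shift; since both are independent of $(X,Y)$, this gives $(Y,\,X+\lambda_1N+\mu N')\stackrel{d}{=}(Y,\,X+\lambda_2N)$ up to a translation of the second coordinate, whence $\rho_m(Y;X+\lambda_2N)=\rho_m(Y;X+\lambda_1N+\mu N')$ by translation invariance of $\rho_m$. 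Since $X+\lambda_1N+\mu N'$ is obtained from $X+\lambda_1N$ by adding the independent noise $\mu N'$, the chain $Y\to(X+\lambda_1N)\to(X+\lambda_1N+\mu N')$ is Markov, so the data-processing inequality for maximal correlation gives $\rho_m(Y;X+\lambda_1N+\mu N')\le\rho_m(Y;X+\lambda_1N)$; chaining the two relations yields the asserted monotonicity.

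For the lower bound in part (II), write $\varphi$ for the characteristic function of $X$ (equivalently of $N$), so $|\varphi(t)|=e^{-b|t|^\alpha}$, and note that $Y=X+\lambda N$ with $N\indep X$ gives $\mathbb{E}[e^{itY}\mid X]=\varphi(\lambda t)\,e^{itX}$. I would test the variational characterization \eqref{def_maximal} with the mean-zero functions $g_t(x):=e^{itx}-\mathbb{E}[e^{itX}]$ and $f_t(y):=e^{ity}-\mathbb{E}[e^{itY}]$ (complex test functions may be used here without changing $\rho_m$, or one may pass to suitable real combinations of $\cos(t\cdot)$ and $\sin(t\cdot)$). A short computation with characteristic functions, using the displayed conditional expectation and $\varphi_Y(t)=\varphi(t)\varphi(\lambda t)$, gives, with $r:=b|t|^\alpha$,
\begin{equation*}
\rho^2\bigl(g_t(X);f_t(Y)\bigr)=e^{-2\lambda^\alpha r}\,\frac{1-e^{-2r}}{1-e^{-2(1+\lambda^\alpha)r}}\ \le\ \frac{1}{1+\lambda^\alpha},
\end{equation*}
with equality in the limit $r\to 0$. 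Letting $t\to 0$ therefore yields $\rho_m(X,X+\lambda N)\ge(1+\lambda^\alpha)^{-1/2}$.

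The matching upper bound is the crux. Since $\mathbb{E}[e^{itY}\mid X]=\varphi(\lambda t)e^{itX}$, the conditional-expectation operator $T_X$ acts on the characters as the Fourier multiplier $t\mapsto\varphi(\lambda t)$; expressing a general mean-zero test function through its spectral measure $\nu$, $\rho_m^2(X;Y)$ equals the supremum over such $\nu$ of the ratio of the positive-definite quadratic forms with kernels $\varphi_X(t-s)\varphi(\lambda t)\overline{\varphi(\lambda s)}$ and $\varphi_Y(t-s)$, and the goal is to show this ratio never exceeds $(1+\lambda^\alpha)^{-1}$ — a positive-definiteness inequality between kernels assembled from $\alpha$-stable characteristic functions, of which the single-character bound above is the rank-one case. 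Equivalently, and perhaps more tractably, one can extract from the data-processing inequality $\rho_m(X;Z)\le\rho_m(X;Y)\rho_m(Y;Z)$ of \eqref{chain-of-inequa} together with self-similarity the functional inequality $\phi(\mu_1+\mu_2)\le\phi(\mu_1)\,\phi\!\bigl(\tfrac{\mu_2}{1+\mu_1}\bigr)$ for $\phi(\mu):=\rho_m(X,X+\mu^{1/\alpha}N)$; splitting $\mu$ into $n\to\infty$ infinitesimal increments turns this into $\phi(\mu)\le(1+\mu)^{-c}$ with $c:=-\lim_{\delta\downarrow0}\delta^{-1}\log\phi(\delta)$, so it would suffice to establish the small-noise (fractional Poincaré / spectral-gap) estimate $\phi(\delta)^2\le 1-\delta+o(\delta)$, which forces $c\ge\tfrac12$ and, combined with the lower bound, $c=\tfrac12$, hence the general formula. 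I expect this sharp small-noise estimate — equivalently, the positive-definiteness inequality — to be the main obstacle; all the rest is data processing plus a characteristic-function computation. As a consistency check, for $\alpha=2$ the law is Gaussian, $\rho_m=|\rho|$, and $|\rho(X,X+\lambda N)|=\mathrm{Var}(X)/\sqrt{\mathrm{Var}(X)\,\mathrm{Var}(X+\lambda N)}=(1+\lambda^2)^{-1/2}$, matching the claim; and since the argument uses only $|\varphi|=e^{-b|t|^\alpha}$, it is unaffected by a skewness parameter or location in the stable law, which merely translate the variables.
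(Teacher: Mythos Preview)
The paper does not actually prove this theorem: it is quoted from \cite{dembo1} (Bryc, Dembo and Kagan) and used as a black box, with no argument supplied. So there is no ``paper's own proof'' to compare against; I can only assess your sketch on its merits.

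Your argument for part (I) is correct and efficient. The key input is the convolution stability $\lambda_1 N+\mu N'\stackrel{d}{=}\lambda_2 N$ (up to a deterministic shift), which combined with the Markov chain $Y\to X+\lambda_1 N\to X+\lambda_1 N+\mu N'$ and the multiplicative data-processing inequality for $\rho_m$ gives the monotonicity immediately. This is exactly the mechanism behind the result in \cite{dembo1}.

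For part (II), your lower bound via the exponential test functions is correct; the computation with $r=b|t|^\alpha$ checks out and the limit $r\to 0$ does give $(1+\lambda^\alpha)^{-1}$. The upper bound, however, is not proved. You offer two reductions: a positive-definiteness inequality between Fourier kernels, and a functional-inequality route $\phi(\mu_1+\mu_2)\le\phi(\mu_1)\phi(\mu_2/(1+\mu_1))$ that, after the substitution $u=\log(1+\mu)$, becomes superadditivity and hence reduces everything to the small-noise bound $\phi(\delta)^2\le 1-\delta+o(\delta)$. That reduction is valid and elegant, but you then stop and label the small-noise estimate ``the main obstacle'' without establishing it. This is a genuine gap: the estimate $\rho_m^2(X;X+\delta^{1/\alpha}N)\le 1-\delta+o(\delta)$ is essentially a sharp Poincar\'e-type spectral-gap inequality for the $\alpha$-stable convolution semigroup, and it is not a triviality --- it is precisely the content that the original proof in \cite{dembo1} supplies (there via an explicit Fourier/spectral analysis of the conditional expectation operator, showing that the characters diagonalise it with multiplier $\varphi(\lambda t)$ in the appropriate weighted $L^2$ sense). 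Until that step is filled in, part (II) remains a plausible plan rather than a proof.
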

This theorem shows that if $W$ (the channel $\X\to \Y$) is an additive noise channel, $Z=X+\lambda N$, where $N$ and $X$ have an $\alpha$-stable distribution, then $\rho_m(X;Z)$ can be analytically calculated. Part (I) of this theorem might look trivial at first, as for $N$ independent of $(X,Y)$, one might think that $Y$ and $X+\lambda N$ are asymptotically independent when $\lambda\to \infty$. However this does not, in general, hold. For example let $X$ take value in $[0, 1]$ and $N$ be a binary random variable taking values $+1$ and $-1$. Then $X+N$ is mapped either to $[1,2]$ or $[-1, 0]$ which are two disjoint sets and hence for any known $|\lambda|>1$, $X+\lambda N$ determines uniquely the value of $X$.
\section{A problem of privacy}
The tradeoff between data privacy and utility has always been an intriguing problem in computer science and information theory. Information-theoretic  privacy was first studied by Shannon who connected information theory to cryptography. Yamamoto \cite{yamamotoequivocationdistortion} introduced a set-up where given $n$ i.i.d. copies of two correlated sources $X$ and $Y$, the receiver is to be able to reconstruct $Y$ within a distortion $D$ and unable to estimate $X$, and hence $X$  is kept private from the receiver. In this set-up privacy is measured in terms of \emph{equivocation} which is the conditional entropy of $X$ given what the receiver observes. Yamamoto \cite{yamamotoequivocationdistortion} characterized the tradeoff between distortion and equivocation. Another set-up for privacy is given in \cite{Asoodeh} where both utility and privacy are defined in terms of mutual information and the \emph{rate-privacy function} is introduced as the tradeoff between utility and privacy.
\begin{definition}
For a given joint distribution $P\times W$, the rate-privacy function is defined as
$$g_{\epsilon}(P, W):=\sup \{I(Y; Z):~ X\to Y\to Z, ~I(X;Z)= \epsilon\}.$$
\end{definition}
The channel $P_{Z|Y}$, over which the supremum is taken, is in fact responsible for masking information about $X$ and is thus called a privacy filter. Thus, $g_{\epsilon}(P, W)$ quantifies the maximum information that one can  receive about $Y$ while revealing only $\epsilon$ bits of information about $X$. From the privacy point of view, the case with zero privacy leakage is of more interest, i.e., $\epsilon=0$, which is called \emph{perfect privacy}. It is shown in \cite{Asoodeh} that for finite $\X$ and $\Y$,  $g_{0}>0$ if and only if vectors $\{P_{X|Y}(\cdot|y):~y\in\Y\}$ are linearly dependent implying that the matrix corresponding to joint distribution $P_{XY}$ is rank-deficient. In particular if $|\Y|>|\X|$, then $g_{0}>0$.

The following lemma shows that the mapping $\epsilon\mapsto \frac{g_{\epsilon}(P, W)}{\epsilon}$ is non-increasing.
\begin{lemma}\label{non-increasing-Lemma}
For a given joint distribution $P\times W$, $\epsilon\mapsto \frac{g_{\epsilon}(P, W)}{\epsilon}$ is non-increasing.
\end{lemma}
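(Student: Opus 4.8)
The plan is to establish the stronger fact that $\epsilon\mapsto g_\epsilon(P,W)$ is concave on its natural domain $[0,I(X;Y)]$, together with the trivial bound $g_0(P,W)\ge 0$; the lemma is then an immediate corollary. The bound $g_0(P,W)\ge 0$ holds because a constant $Z$ satisfies $X\to Y\to Z$ with $I(X;Z)=0$ and contributes $I(Y;Z)=0$ to the supremum. Granting concavity, for $0<\epsilon_1<\epsilon_2$ I would set $\lambda:=\epsilon_1/\epsilon_2\in(0,1)$ and write $\epsilon_1=\lambda\epsilon_2+(1-\lambda)\cdot 0$; concavity gives $g_{\epsilon_1}(P,W)\ge\lambda g_{\epsilon_2}(P,W)+(1-\lambda)g_0(P,W)\ge\lambda g_{\epsilon_2}(P,W)$, which rearranges to $g_{\epsilon_1}(P,W)/\epsilon_1\ge g_{\epsilon_2}(P,W)/\epsilon_2$, i.e.\ the claimed monotonicity.

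Rather than prove full concavity, it is more economical to run only the single instance of the underlying time-sharing (mixing) construction that the lemma actually needs, which rolls the concavity step and the use of $g_0\ge 0$ into one. Fix $0<\epsilon_1<\epsilon_2\le I(X;Y)$; I may assume $g_{\epsilon_2}(P,W)<\infty$, since otherwise the inequality is trivial. Given $\delta>0$, I would choose a privacy filter $P_{Z_2|Y}$ with $X\to Y\to Z_2$, $I(X;Z_2)=\epsilon_2$ and $I(Y;Z_2)\ge g_{\epsilon_2}(P,W)-\delta$. Letting $\lambda:=\epsilon_1/\epsilon_2$ and letting $\star$ be a fresh symbol outside the alphabet of $Z_2$, I would define a new filter $P_{Z|Y}$ that, using a coin drawn independently of $(X,Y)$, ``erases'' with probability $1-\lambda$ (outputting $\star$) and otherwise emits a sample of $P_{Z_2|Y}(\cdot\,|\,y)$. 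Since the erasure coin is independent of $(X,Y)$, the chain $X\to Y\to Z$ holds, so $Z$ is admissible in the optimization defining $g_{\epsilon_1}$.

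It then remains to evaluate the two mutual informations. Let $E\in\{0,1\}$ be the erasure indicator; it is a deterministic function of $Z$ and satisfies $E\indep(X,Y)$. Hence $I(X;Z)=I(X;Z,E)=I(X;Z\mid E)$, the first equality because $H(E\mid Z)=0$ and the second because $I(X;E)=0$; and since $Z$ is constant on $\{E=0\}$ while $(X,Z)$ conditioned on $\{E=1\}$ has the law of $(X,Z_2)$, we get $I(X;Z\mid E)=\lambda I(X;Z_2)=\lambda\epsilon_2=\epsilon_1$. The identical computation gives $I(Y;Z)=\lambda I(Y;Z_2)\ge\lambda\big(g_{\epsilon_2}(P,W)-\delta\big)$, so $g_{\epsilon_1}(P,W)\ge\frac{\epsilon_1}{\epsilon_2}\big(g_{\epsilon_2}(P,W)-\delta\big)$; letting $\delta\downarrow 0$ finishes the proof. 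The only genuine obstacle is that the supremum defining $g_{\epsilon_2}(P,W)$ need not be attained, which is precisely why I work with a $\delta$-optimal filter and pass to the limit; everything else is routine bookkeeping with the chain rule for mutual information. (Running the same mixing construction between $\delta$-optimal filters for two arbitrary levels $\epsilon_1,\epsilon_2$, instead of one level and the trivial level $0$, yields full concavity of $g_\bullet(P,W)$ should that be wanted elsewhere.)
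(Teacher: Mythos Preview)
Your argument is correct: the erasure/time-sharing construction cleanly yields $I(X;Z)=\lambda\,\epsilon_2=\epsilon_1$ and $I(Y;Z)=\lambda\,I(Y;Z_2)$, and the passage to the limit in $\delta$ handles the supremum. The paper itself does not give a self-contained proof here---it simply cites \cite{schulman} and says the proof follows the same steps---so your write-up actually supplies what the paper omits. The mixing-with-an-erasure-symbol technique you use is precisely the standard device behind such concavity/monotonicity statements for information quantities under Markov constraints, so your approach is almost certainly the one intended by the citation; in any case it is the natural one, and nothing further is needed.
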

\begin{proof}
The proof follows the same steps as the proof of \cite[Lemma. 1]{schulman}.
\end{proof}
This lemma yields the following bound for the rate-privacy function. \begin{corollary}
For a given joint distribution $P\times W$, we have for any $\epsilon>0$
$$g_{\epsilon}(P,W)\geq \epsilon\frac{H(Y)}{I(P;W)}.$$
\end{corollary}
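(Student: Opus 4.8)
The plan is to evaluate the rate-privacy function at the largest admissible value of the privacy parameter and then invoke Lemma~\ref{non-increasing-Lemma}. First I would observe that for any Markov chain $X\to Y\to Z$ the data processing inequality for mutual information gives $I(X;Z)\le I(X;Y)=I(P;W)$, so the constraint $I(X;Z)=\epsilon$ is feasible only when $\epsilon\le I(P;W)$; in particular it suffices to establish the claimed bound on the interval $\epsilon\in(0,I(P;W)]$ (and we may tacitly assume $I(P;W)>0$, as otherwise the statement is vacuous).

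Next I would exhibit an explicit feasible privacy filter at the endpoint $\epsilon=I(P;W)$: taking $Z=Y$ (the identity channel for $P_{Z|Y}$) trivially satisfies $X\to Y\to Z$ and yields $I(X;Z)=I(X;Y)=I(P;W)$, while $I(Y;Z)=I(Y;Y)=H(Y)$. Since $g_{\epsilon}(P,W)$ is a supremum over feasible filters, this gives $g_{I(P;W)}(P,W)\ge H(Y)$. (Equality in fact holds, because $I(Y;Z)\le H(Y)$ for every $Z$, but only the lower bound is needed below.)

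Finally I would apply Lemma~\ref{non-increasing-Lemma}: for $\epsilon\le I(P;W)$, monotonicity of $\epsilon\mapsto g_{\epsilon}(P,W)/\epsilon$ gives
\[
\frac{g_{\epsilon}(P,W)}{\epsilon}\ \ge\ \frac{g_{I(P;W)}(P,W)}{I(P;W)}\ \ge\ \frac{H(Y)}{I(P;W)},
\]
and multiplying through by $\epsilon>0$ yields the assertion. The only non-routine ingredient is Lemma~\ref{non-increasing-Lemma} itself, whose proof is deferred to \cite{schulman}; the remaining content here is essentially just recognizing $I(P;W)$ as the correct right endpoint of the feasible range and checking that the trivial filter $Z=Y$ is admissible there, so there is no genuinely hard step.
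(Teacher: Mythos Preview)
Your proposal is correct and follows essentially the same approach as the paper: you identify $I(P;W)$ as the right endpoint of the feasible range via data processing, note that $g_{I(P;W)}(P,W)=H(Y)$ (using the identity filter $Z=Y$), and then apply Lemma~\ref{non-increasing-Lemma}. The paper's proof is terser but structurally identical.
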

\begin{proof}
By  the Markov condition $X\to Y\to Z$, we know that $\epsilon\leq I(P;W)$. When $\epsilon=I(P;W)$ then the privacy constraint is removed and hence $g_{I(P;W)}=H(Y)$. The result then follows from Lemma~\ref{non-increasing-Lemma}.
\end{proof}
It is important to note, however, that the mutual information has deficiencies as a measure of privacy (e.g. \cite{Evfimievski:2003:LPB:773153.773174}). We can, instead, use maximal correlation as a measure of privacy and then define
$$\hat{g}_{\epsilon}(P,W):=\sup \{I(Y; Z):~ X\to Y\to Z, ~\rho_m(X;Z)\leq \epsilon\},$$ as the corresponding privacy-rate tradeoff.

Suppose now that the privacy filter is such that the Markov condition $X\to Y\to Z$ is satisfied and the channel $P_{Z|X}$ can be modeled by $Z=X+\lambda N$ for $\lambda>0$ where $N$ and $(X, Y)$ are independent and has the same $\alpha$-stable distribution as $X$ for some $\alpha\in(0,2]$. Then by Theorem~\ref{stable-Theorem}, we know that $\rho_m(X;Z)=\frac{1}{\sqrt{1+\lambda^{\alpha}}}$. Let $$\varrho_{\eps}(X;Y):=\rho_m(Y; X+\lambda^{*}N),$$ where $$\lambda^{*}_{\eps}=\left(\frac{1}{\epsilon^2}-1\right)^{1/\alpha}.$$
We can therefore conclude from Theorem~\ref{stable-Theorem} that
\begin{equation}\label{maximum-of-maximal}
    \max \rho_m(Y;X+\lambda N)=\varrho_{\eps}(X;Y)
\end{equation}
where the maximum is taken over all $\lambda$ such that $\rho_m(X;X+\lambda N)\leq \epsilon$.
This says that if the privacy filter meets the above model, then the best $\lambda$ which satisfies $\epsilon$ maximal correlation privacy; $\rho_m(X;Z)\leq \epsilon$, is $\lambda^{*}_{\eps}$. In other words, among all such privacy filters
\begin{equation}\label{rho_privacy_maximization}
    \sup_{\rho_m(X;Z)\leq \epsilon}\rho_m(Y; Z)=\varrho_{\eps}(X;Y).
\end{equation}
Unfortunately, all stable distributions have infinite support (like the Poisson and Gaussian distributions), thus $|\Y|=\infty$, and hence we can not invoke Theorem~\ref{theorem-Min-PX} to obtain a lower bound for $\hat{g}_{\epsilon}(P,W)$. Finding a similar upper-bound of $\rho_m(X;Y)$ in terms of mutual information for general alphabets remains open. It is worth mentioning that the channel model, $Z=X+\lambda N$ is similar to the \emph{artificial noise} introduced in \cite{artificial_noise} in which both signal and noise are assumed to be Gaussian, i.e., having a $2$-stable distribution.

Defining a utility in terms of linear correlation coefficient, we can construct a \emph{privacy-constrained estimation problem}. Suppose an agent knowing $Z$ wants, on the one hand to estimate $Y$ as reliably as possible, and on the other hand, to satisfy the privacy constraint $\rho_m(X;Z)\leq \eps$. Let $\mathsf {mmse}(Y; \lambda)$ denote the minimum mean squared error (MMSE) of $Y$ based on $Z=X+\lambda N$, that is $$\mathsf {mmse}(Y; \lambda):=\mathbb{E}\left[\big(Y-\mathbb{E}[Y|X+\lambda N]\big)^2\right].$$ Let $\mathsf{mmse}_{\eps}(Y)$ denote the minimum achievable $\mathsf{mmse}(Y; \lambda)$ when $\rho_m(X;Z)\leq \eps$.
\begin{theorem}
If the privacy filter $P_{Y|Z}$ is such that for random variables $X\to Y\to Z$, $P_{Z|X}$ can be modeled as $Z=X+\lambda N$, for $N$ independent of $(X,Y)$ and having similar $\alpha$-stable distribution as $X$ for $\alpha\in(0, 2]$. Then
$$\mathsf {mmse}_{\eps}(Y)\geq (1-\varrho_{\eps}^2(X;Y))\mathsf{var}(Y).$$
\end{theorem}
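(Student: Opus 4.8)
The plan is to rewrite the MMSE as an affine function of a squared linear correlation coefficient and then control that coefficient by the maximal correlation. First I would fix an admissible $\lambda$, put $Z=X+\lambda N$ and $\hat{Y}:=\mathbb{E}[Y|Z]$, and use the orthogonality principle together with $\mathbb{E}[\hat{Y}]=\mathbb{E}[Y]$: this gives $\mathsf{mmse}(Y;\lambda)=\mathsf{var}(Y)-\mathsf{var}(\hat{Y})$. Since $\mathrm{Cov}(Y,\hat{Y})=\mathsf{var}(\hat{Y})$ by the tower property, one has $\mathsf{var}(\hat{Y})=\rho^2(Y;\hat{Y})\,\mathsf{var}(Y)$, and the characterization \eqref{correlation-supremum} (read with the roles of $X$ and $Y$ interchanged and with $Z$ in place of $Y$) identifies $\rho(Y;\hat{Y})=\sup_f\rho(Y;f(Z))$. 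Combining these,
$$\mathsf{mmse}(Y;\lambda)=\mathsf{var}(Y)\Big(1-\sup_f\rho^2(Y;f(Z))\Big).$$

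Next I would note that $\sup_f\rho(Y;f(Z))\le\rho_m(Y;Z)$, which is immediate from \eqref{def_maximal} because allowing the supremum to also range over zero-mean functions of $Y$ only enlarges it. Hence, for every $\lambda>0$,
$$\mathsf{mmse}(Y;\lambda)\ge\mathsf{var}(Y)\big(1-\rho_m^2(Y;X+\lambda N)\big).$$
Finally I would invoke Theorem~\ref{stable-Theorem}: by part (II), $\rho_m(X;X+\lambda N)=(1+\lambda^{\alpha})^{-1/2}$ is strictly decreasing in $\lambda\ge0$, so the privacy constraint $\rho_m(X;Z)\le\eps$ is exactly $\lambda\ge\lambda^{*}_{\eps}$; and by part (I), for each such $\lambda$, $\rho_m(Y;X+\lambda N)\le\rho_m(Y;X+\lambda^{*}_{\eps}N)=\varrho_{\eps}(X;Y)$. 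Plugging this in yields $\mathsf{mmse}(Y;\lambda)\ge(1-\varrho_{\eps}^2(X;Y))\,\mathsf{var}(Y)$ for every admissible $\lambda$; since the bound does not depend on $\lambda$, it passes to the infimum that defines $\mathsf{mmse}_{\eps}(Y)$, which is the assertion.

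I do not anticipate a serious obstacle: the proof is essentially a concatenation of \eqref{correlation-supremum}, the definition of $\rho_m$, and Theorem~\ref{stable-Theorem}. The steps needing mild care are (i) keeping straight the exchange of the roles of the estimated and the conditioning variables when quoting \eqref{correlation-supremum}, since there the estimated variable is $X$ while here it is $Y$; and (ii) observing that the privacy constraint pins down the admissible interval of $\lambda$ via the monotonicity of $(1+\lambda^{\alpha})^{-1/2}$ and that the resulting lower bound, being uniform over that interval, survives minimization.
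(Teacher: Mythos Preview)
Your proposal is correct and follows essentially the same route as the paper: rewrite $\mathsf{mmse}(Y;\lambda)=\mathsf{var}(Y)\big(1-\rho^2(Y;\mathbb{E}[Y|Z])\big)$ via \eqref{correlation-supremum}, bound $\rho(Y;\mathbb{E}[Y|Z])\le\rho_m(Y;Z)$, and then optimize over admissible $\lambda$. The only cosmetic difference is that you re-derive the optimization step directly from both parts of Theorem~\ref{stable-Theorem}, whereas the paper simply cites the already-established identity \eqref{maximum-of-maximal}; the content is the same.
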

\begin{proof}
By simple algebraic manipulations, we can write
\begin{eqnarray*}
  \mathsf {mmse}(Y; \lambda) &=& \E[Y^2]-\E[\E^2[Y|Z]] \\
   &=& \mathsf{var}(Y)-||\E[Y]-\E[Y|Z]||_2^2 \\
   &\stackrel{(a)}{=}&  \mathsf{var}(Y)[1-\rho^2(Y; \E[Y|Z])],
\end{eqnarray*}
where $(a)$ is obtained from \eqref{correlation-supremum}. Since $\rho(Y, g(Z))\leq \rho_m(Y;Z)$ for any function $g$, we have
$$\mathsf {mmse}(Y; \lambda) \geq \mathsf{var}(Y)(1-\rho^2_m(Y; Z)).$$
The result follows by taking minimum from both sides over $\lambda$ such that $\rho_m(X;Z)\leq \eps$ and invoking \eqref{maximum-of-maximal}.
\end{proof}
The lower bound for MMSE becomes zero only if $\varrho_{\eps}(X; Y)=1$. It is easy to verify  that in the trivial Markov chain $Y\to X\to \lambda^*_{\epsilon} N$, we have $\rho_m(Y;X)\geq \rho_m(Y; X+\lambda^*_{\epsilon} N)$, therefore if $\rho_m(X;Y)<1$, then $\varrho_{\eps}(X;Y)<1$ and thus $(1-\varrho_{\eps}^2(X;Y))$ is bounded away from zero. This is the price that one has to pay to have \emph{privacy-constrained} estimation. We note that $\varrho_{\eps}(X;Y)$ is non-increasing in $\eps$ and thus for a  more stringent privacy constraint (i.e., smaller $\eps$) we have bigger $\mathsf{mmse}_{\eps}(Y)$.
%
%

\bibliographystyle{IEEEtran}
\bibliography{bibliography}

\end{document}